\newcommand{\R}{\mathbb{R}}
\newcommand{\N}{\mathbb{N}}
\newcommand{\E}{\mathcal{E}}
\DeclareMathOperator{\sech}{sech}
\newtheorem{Definition}{Definition}
\newtheorem{Theorem}{Theorem}
\title{Applications of nonlocal constants of motion\\
in Lagrangian Dynamics}
\author{Gianluca Gorni\\
Universit\`a di Udine\\
Dipartimento di Matematica e Informatica\\
via delle Scienze~208, 33100 Udine, Italy\\
\tt{gianluca.gorni@uniud.it}
\and
Gaetano  Zampieri\\
Universit\`a di Verona\\
Dipartimento di Informatica\\
strada Le Grazie 15, 37134 Verona, Italy\\
\tt{gaetano.zampieri@univr.it}}
\date{}
\begin{document}

\maketitle

\begin{abstract}
We give a recipe to generate ``nonlocal'' constants of motion for ODE Lagrangian systems and we apply the method to find useful constants of motion for dissipative system, for the Lane-Emden equation, and for the Maxwell-Bloch system with RWA.
\end{abstract}

\section{Introduction}
\label{Noether}

Analytical Dynamics in one independent variable~$t$ studies those systems whose ``natural'' motions $t\mapsto q(t)\in\R^n$ are fixed-extrema stationary points for some action functional
\begin{equation}\label{actionFunctional}
  S_{a,b}\bigl(q(\cdot)\bigr):=
  \int_a^b
  L\bigl(t,q(t),\dot q(t)\bigr)\,dt
\end{equation}
in the sense of the Calculus of Variations, where the scalar function $L(t,q,\dot q)$ is called the Lagrangian of the system. Under a modicum of regularity assumptions, the natural motions are the solutions to the \emph{Euler-Lagrange equation}
 \begin{equation}\label{Euler-Lagrange}
  \frac{d}{dt}\partial_{\dot q}L\bigl(t,q(t),\dot q(t)\bigr)-
  \partial_qL\bigl(t,q(t),\dot q(t)\bigr)
  =0 \qquad\text{for all }t.
\end{equation}

Here and in the sequel we will use the notation $\partial_{q}$ (or occasionally $\nabla$) for the partial derivative (gradient) with respect to the vector~$q$, the symbol $\partial_{\dot q}$ for the partial derivative with respect to the vector~$\dot q$, and the notation $\cdot$~for the scalar product in~$\R^n$.

A~first integral for the mechanical system is a function of the form
\begin{equation}\label{trueFirstIntegral}
  N(t,q,\dot q),\qquad t\in\R,\quad q,\dot q\in\R^n,
\end{equation}
that is constant along all natural motions. Noether's Theorem~\cite{NoetherOriginal} establishes a connection between first integrals and symmetry properties of the Lagrangian~$L$.

A previous work~\cite{GZNoether} of ours revisited Noether's Theorem from different points of view, such as how to interpret asynchronous perturbations (or ``time change'') and boundary (or Bessel-Hagen, or gauge) terms (see~\cite[Sec.~4--5]{GZsao} for a partial, simplified version). Here we take up the part of that paper where we extended the theory so as to include not just ``true'' first integrals in the sense above~\eqref{trueFirstIntegral}, but also constants of motion of the form
\begin{equation}\label{genericNonlocaConstantOfMotion}
  N\bigl(t,q(t),\dot q(t)\bigr)+
  \int_{t_0}^t M\bigl(s,q(s),\dot q(s)\bigr)ds\,.
\end{equation}
Such a constant of motion will be called ``nonlocal'', or integral, in the sequel, because its value at a time~$t$ depends not only on the value of position and velocity at time~$t$, but also on the past history of the motion.

The basic, very simple result on nonlocal constants of motion in that paper~\cite[Subs.~3.2]{GZNoether} can be reformulated in the following self-contained way, which is all that is needed for the sequel:

\begin{Theorem}\label{theoremOfVeryGeneralConstantOfMotion}
Let $t\mapsto q(t)$ be a solution to the Euler-Lagrange equation~\eqref{Euler-Lagrange} and $q_\lambda(t)$ be a smooth family of synchronous perturbed motions, with the parameter $\lambda$ in a neighbourhood of $0\in\R$, and such that $q_\lambda(t)\equiv q(t)$ when $\lambda=0$. Then the following function is constant:
\begin{equation}\label{veryGeneralConstantAlongMotion}
  t\mapsto
  \partial_{\dot q}
  L\bigl(t,q(t),\dot q(t)\bigr)\cdot
  \partial_\lambda q_\lambda(t)
  \big|_{\lambda=0}-
  \int_{t_0}^t
  \frac{\partial}{\partial\lambda}
  L\bigl(s,q_\lambda(s),\dot q_\lambda(s)\bigr)
  \Big|_{\lambda=0}ds\,.
\end{equation}
\end{Theorem}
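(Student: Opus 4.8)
The plan is to show that the time-derivative of the function in~\eqref{veryGeneralConstantAlongMotion} vanishes identically. Write $\eta(t):=\partial_\lambda q_\lambda(t)\big|_{\lambda=0}$ for the infinitesimal variation and $p(t):=\partial_{\dot q}L\bigl(t,q(t),\dot q(t)\bigr)$ for the conjugate momentum, so that the first term is simply $p(t)\cdot\eta(t)$. By smoothness of the family $q_\lambda$ the mixed partial derivatives commute, so that $\partial_\lambda\dot q_\lambda(t)\big|_{\lambda=0}=\dot\eta(t)$; I would record this interchange at the outset, since it is essentially the only place where the regularity hypothesis is used.

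First I would differentiate the boundary term $p(t)\cdot\eta(t)$ by the product rule, obtaining $\dot p(t)\cdot\eta(t)+p(t)\cdot\dot\eta(t)$. Here the Euler-Lagrange equation~\eqref{Euler-Lagrange} enters decisively: it lets me rewrite $\dot p(t)=\frac{d}{dt}\partial_{\dot q}L=\partial_qL$, so that the derivative of the first term becomes $\partial_qL\cdot\eta+\partial_{\dot q}L\cdot\dot\eta$, with all partial derivatives of $L$ understood to be evaluated at $\bigl(t,q(t),\dot q(t)\bigr)$.

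Next I would differentiate the integral term by the Fundamental Theorem of Calculus, which produces $-\,\partial_\lambda L\bigl(t,q_\lambda(t),\dot q_\lambda(t)\bigr)\big|_{\lambda=0}$. Expanding this $\lambda$-derivative by the chain rule and using $\partial_\lambda q_\lambda|_{\lambda=0}=\eta$ together with the interchange $\partial_\lambda\dot q_\lambda|_{\lambda=0}=\dot\eta$, I obtain exactly $-\bigl(\partial_qL\cdot\eta+\partial_{\dot q}L\cdot\dot\eta\bigr)$. Adding the two contributions, the terms cancel identically, the derivative is zero, and hence the function is constant.

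The computation is almost entirely routine; the one conceptual point to get right is that the Euler-Lagrange equation is used precisely to turn the ``boundary'' derivative $\frac{d}{dt}\partial_{\dot q}L$ into $\partial_qL$, which is what makes it match the $\partial_qL\cdot\eta$ piece coming from the integrand. I expect no genuine obstacle beyond keeping careful track of which arguments each partial derivative is evaluated at, and justifying the differentiation-under-the-integral and the mixed-partial interchange from the smoothness assumption on the family $q_\lambda$.
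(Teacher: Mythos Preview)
Your proof is correct and follows exactly the approach the paper indicates: take the time derivative of~\eqref{veryGeneralConstantAlongMotion}, interchange the order of differentiation $\partial_\lambda$ and $\partial_t$, and invoke the Euler-Lagrange equation to obtain the cancellation. The paper's own proof is simply a one-line sketch of this same computation.
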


\begin{proof}
Simply take the time derivative of~\eqref{veryGeneralConstantAlongMotion}, reverse the derivation order and use the Euler-Lagrange equation~\eqref{Euler-Lagrange}.
\end{proof}

The Theorem above justifies the following terminology, that we will use repeatedly in this work:

\begin{Definition}
\label{definitionOfSyncronousFamily}
For a given natural motion $q({}\cdot{})$, a one parameter family of perturbed motions (or, simply, ``a~family'') will be a smooth function $(\lambda,t)\mapsto q_{\lambda}(t)$ such that $q_0(t)\equiv q(t)$, and the nonlocal constant of motion associated with $q_\lambda$ will be formula~\eqref{veryGeneralConstantAlongMotion}.
\end{Definition}

Theorem~\ref{theoremOfVeryGeneralConstantOfMotion} gives us a simple machinery that takes a perturbed family such as
\begin{equation}
  q(t+\lambda),\qquad
  q(t+\lambda e^{at}),\qquad
  q(t+\lambda t^2),\qquad
  e^\lambda q(e^\lambda t) \text{ etc.}
\end{equation}
and computes its associated nonlocal constant of motion. If the family is chosen randomly the constant of motion can very well turn out to be trivial (a numeric constant, for example) or of no apparent practical value. The original work~\cite{GZNoether} provided a small early sample of systems where such constant of motion seemed interesting. Our purpose in this paper is to pursue the topic much further, showing how carefully selected families can lead to nonlocal constants of motion that are useful in studying the system.

Our initial guiding idea here is that a function such as~\eqref{genericNonlocaConstantOfMotion} has a chance to be valuable if the the integrand~$M$ does not change sign, because then $N$~will be monotonic in time; if, in addition, $N$ happened to be coercive in~$q,\dot q$, or at least in~$\dot q$, we could derive estimates that imply global existence in time for either the past or the future. To massage formula~\eqref{veryGeneralConstantAlongMotion} into such a useful form we will replace~$\ddot q$ with the Euler-Lagrange equation~\eqref{Euler-Lagrange} and integrate by part as needed.

In Section~\ref{dissipativesection} we use nonlocal constants of motion to prove global existence and asymptotic estimates for the solutions of the dissipative equation $\ddot q=-k\dot q-\nabla U(q)$ when $U$ is a potential bounded from below. In particular, the existence in the past has not been considered before, to the best of our knowledge.

In Section~\ref{LaneEmdenSection} we provide three families that together prove the global existence of solution for the Lane-Emden equation when $n$ is odd, with asymptotic estimates as~$t\to+\infty$ when $n$ is odd and~$\ge5$.

Section~\ref{MBsection} is about the Maxwell-Bloch with rotating wave approximation (RWA) system in Ca\c{s}u's Lagrangian formulation~\cite{Casu}. We show a family that is not directly useful in the sense that we described, but that leads to a separation of one of the variables from the others, and to a time-dependent true first integral.

\section{Dissipative mechanical system}\label{dissipativesection}

Next, let us consider the equation of motion
\begin{equation}\label{eqdissipative}
  \ddot q=-k\dot q-\nabla U(q),\qquad q\in\R^n,
\end{equation}
where $k>0$ and $U$ is a smooth potential on~$\R^n$. To study the system it is natural to use to the function 
\begin{equation}
  \mathcal{E}(q,\dot q)=\frac{1}{2} \lvert\dot q\rvert^2+U(q)
\end{equation}
which is the energy first integral in the non-dissipative case $k=0$. It is easy and well-known to prove that $\mathcal{E}$ is \emph{decreasing} along solutions in the actual dissipative case $k>0$:
\begin{equation}\label{derivativeOfE}
  \dot{\mathcal{E}}=
  \dot q\cdot \partial_{q}\mathcal{E}
  +\ddot q\cdot\partial_{\dot q}\mathcal{E}=
  \dot q\cdot \nabla U(q)
  +\Bigl(-k\dot q-\nabla U(q)\Bigr)\cdot\dot q
  =-k\lvert\dot q\rvert^2\le 0
\end{equation}
where $\partial_{q}\E$ is the gradient in $q$, $\partial_{\dot q}\E$ the one in $\dot q$, and $\cdot$ denotes the scalar product. If we assume that $U$ is bounded from below, i.e., that
\begin{equation}\label{coerciveDefinition}
  U_{\inf}:=\inf_{q\in\R^n} U(q)>-\infty,
\end{equation}
then the fact that $\mathcal{E}$ decreases along the solution implies that $\dot q(t)$ is bounded in the future:
\begin{equation}\label{boundedInTheFuture}
  \frac{1}{2} \bigl|\dot q(t)\bigr|^2\le
  \frac{1}{2} \bigl|\dot q(t_0)\bigr|^2+U\bigl(q(t_0)\bigr)
  -U_{\inf}
  \quad\text{for }t\ge t_0.
\end{equation}
With a standard reasoning, $q(t)$ is uniformly continuous in the future, hence bounded when $t$ is bounded, and we can conclude that the solutions are global in the future.

A complementary information can be obtained if we reinterpret equation~\eqref{derivativeOfE} as asserting that the following expression
\begin{equation}\label{nonlocalEnergyDerivative}
  \frac{1}{2} \bigl|\dot q(t)\bigr|^2+U\bigl(q(t)\bigr)
  +k\int_{t_0}^t\bigl|\dot q(s)\bigr|^2ds.
\end{equation}
is a nonlocal constant of motion of the form~\eqref{genericNonlocaConstantOfMotion}. This implies that
\begin{multline}
  U_{\inf}+k\int_{t_0}^t\bigl|\dot q(s)\bigr|^2ds
  \le 
  \frac{1}{2} \bigl|\dot q(t)\bigr|^2+U\bigl(q(t)\bigr)
  +k\int_{t_0}^t\bigl|\dot q(s)\bigr|^2ds=\\
  =\frac{1}{2}\bigl|\dot q(t_0)\bigr|^2
  +U\bigl(q(t_0)\bigr) \quad\text{for all }t,
\end{multline}
from which we can extract an $L^2$ estimate of~$\dot q$ in the future:
\begin{equation}\label{L2estimateInTheFuture}
  \int_{t_0}^{+\infty}\bigl|\dot q(s)\bigr|^2ds<+\infty.
\end{equation}

Our second order equation~\eqref{eqdissipative} is the Euler-Lagrange equation of the Lagrangian
\begin{equation}
   L(t,q,\dot q):=e^{kt}\Bigl(
   \frac{1}{2}\lvert\dot q\rvert^2-U(q)\Bigr).
\end{equation}
In our paper \cite{GZNoether} we found a non-local constant of motion by means of the  the time-shift family of perturbed motions $q_\lambda(t):=q(t+\lambda)$ of the natural $q(t)$. This choice is quite natural since it yields the conservation of energy in the conservative case $k=0$. Now, let us consider the more general  family
\begin{equation}\label{familydissipative}
  q_\lambda(t):=q(t+\lambda e^{at})
\end{equation}
with  a new real parameter~$a$. Then, using the Lagrange equation~\eqref{eqdissipative}
\begin{align*}
  \frac{\partial}{\partial\lambda}
  L\bigl(t,&q_\lambda(t),\dot q_\lambda(t)\bigr)
  \Big|_{\lambda=0}=\\
  &=e^{kt}\frac{\partial}{\partial\lambda}\Bigl(
  \frac{1}{2}\bigl((1+\lambda ae^{at})
  \dot q(t+\lambda e^{at})\bigr)^2
  -U\bigl(q(t+\lambda e^{at})\bigr)\Bigr) \Big|_{\lambda=0}=\\
  &=e^{(a+k)t}\dot q(t)\cdot\Bigl(a\dot q(t)
  -\nabla U\bigl(q(t)\bigr)+\ddot q(t)\Bigr)
  =\\
  &=e^{(a+k)t}\dot q(t)\cdot\Bigl((a-k)\dot q(t)-
  2\nabla U\bigl(q(t)\bigr)\Bigr)=\\
  &=\frac{d}{dt}\left(-2e^{(a+k)t}U\bigl(q(t)\bigr)\right)+\\
  &\qquad+e^{(a+k)t}\Bigl((a-k)\lvert\dot q(t)\rvert^2
  +2(a+k)U\bigl(q(t)\bigr)\Bigr).
\end{align*}
The associated nonlocal constant of motion is
\begin{multline}\label{energyPluspositiveintegral}
  e^{(a+k)t}\Bigl(\lvert\dot q(t)\rvert^2
  +2U\bigl(q(t)\bigr)\Bigr)
  +{}\\
  +\int_t^{t_0}e^{(a+k)s}\Bigl((a-k)\lvert\dot q(s)\rvert^2
  +2(a+k)U\bigl(q(s)\bigr)\Bigr)ds,
\end{multline}
or, equivalently,
\begin{multline}\label{energyPluspositiveintegral2}
  e^{(a+k)t}\Bigl(\lvert\dot q(t)\rvert^2
  +2U\bigl(q(t)\bigr)-2U_{\inf}\Bigr)
  +2e^{(a+k)t_0}U_{\inf}
  -{}\\
  -\int_{t_0}^t e^{(a+k)s}
  \Bigl((a-k)\lvert\dot q(s)\rvert^2
  +2(a+k)\bigl(U\bigl(q(s)\bigr)-U_{\inf}\bigr)\Bigr)ds,
\end{multline}
If we choose either $a\le-k$ or $a\ge k$ the integrand in formula~\eqref{energyPluspositiveintegral2} does not change sign. Actually, the choice $a=-k$ makes the constant of motion~\eqref{energyPluspositiveintegral2} simply a multiple of the one given by formula~\eqref{nonlocalEnergyDerivative}, from which we have extracted information on the solutions in the future. With the alternative choice $a\ge k$, the integrand in~\eqref{energyPluspositiveintegral} is~$\ge0$, so that the integral term is monotonically increasing in~$t$ and the function
\begin{equation}
  e^{(a+k)t}\Bigl(\bigl|\dot q(t)\bigr|^2
  +2U\bigl(q(t)\bigr)-2U_{\inf}\Bigr)=
  2e^{(a+k)t}\bigl(\mathcal{E}\bigl(q(t),\dot q(t)\bigr)
  -U_{\inf}\bigr)
\end{equation}
must be monotonically increasing. With a standard reasoning we can conclude that all solutions are global in the past too, and we have the estimates, choosing $a=k$,
\begin{gather}
  \mathcal{E}\bigl(q(t),\dot q(t)\bigr)-U_{\inf}\le
  e^{2k(t_0-t)}\Bigl(
  \mathcal{E}\bigl(q(t_0),\dot q(t_0)\bigr)
  -U_{\inf}\Bigr)\quad\text{for all }t\le t_0,
  \label{energyEstimateInThePast}\\
  \bigl|\dot q(t)\bigr|^2\le e^{2k(t_0-t)}
  \Bigl(\bigl|\dot q(t_0)\bigr|^2+2U\bigl(q(t_0)\bigr)
  -2U_{\inf}\Bigr)\quad\text{for all }t\le t_0,
  \label{speedEstimateInThePast}
\end{gather}
which are sharp in the trivial case when $U$ is constant.

Summing up:

\begin{Theorem}
If $k>0$ and $U$ is a smooth potential on~$\R^n$ which is bounded from below, all solutions of the dissipative equation $\ddot q=-k\dot q-\nabla U(q)$ are defined for all~$t\in\R$, and we have the estimates~\eqref{boundedInTheFuture}, \eqref{L2estimateInTheFuture} in the future and~\eqref{energyEstimateInThePast}, \eqref{speedEstimateInThePast} in the past.
\end{Theorem}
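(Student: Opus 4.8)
The plan is to collect the estimates already derived in the discussion above into the two regimes, future and past, and to supply the one ingredient the narrative leaves implicit: the continuation argument that upgrades \emph{a priori} velocity bounds into global existence. Throughout, the estimates produced by the constants of motion are valid on the maximal interval of existence of a given solution, and the key point is that they stay bounded on every \emph{bounded} time subinterval, which is what forbids a finite escape time.

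For the future I would argue as follows. The energy identity~\eqref{derivativeOfE} shows $\mathcal{E}$ is nonincreasing, so $\mathcal{E}\bigl(q(t),\dot q(t)\bigr)\le\mathcal{E}\bigl(q(t_0),\dot q(t_0)\bigr)$ for $t\ge t_0$; since $U\ge U_{\inf}$ this is precisely the velocity bound~\eqref{boundedInTheFuture}. Reading~\eqref{nonlocalEnergyDerivative} as a constant of motion and again using $U\ge U_{\inf}$ bounds the positive integral term $k\int_{t_0}^t|\dot q(s)|^2\,ds$ by the finite initial value, which in the limit $t\to+\infty$ yields the $L^2$ estimate~\eqref{L2estimateInTheFuture}. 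Global existence in the future then follows because~\eqref{boundedInTheFuture} makes $\dot q$ bounded on any $[t_0,T]$, so $q$ is Lipschitz and confined to a compact set there, and the standard continuation criterion forbids blow-up in finite forward time.

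For the past I would invoke Theorem~\ref{theoremOfVeryGeneralConstantOfMotion} with the family~\eqref{familydissipative} and the choice $a=k$, so that the integrand in~\eqref{energyPluspositiveintegral2} reduces to $4k\bigl(U(q(s))-U_{\inf}\bigr)\ge0$. Constancy of~\eqref{energyPluspositiveintegral2} then forces $2e^{2kt}\bigl(\mathcal{E}(q(t),\dot q(t))-U_{\inf}\bigr)$ to be monotonically increasing in $t$; evaluating this monotonicity for $t\le t_0$ and dividing by $2e^{2kt}$ gives the energy estimate~\eqref{energyEstimateInThePast}, whence $\tfrac12|\dot q|^2\le\mathcal{E}-U_{\inf}$ furnishes the speed estimate~\eqref{speedEstimateInThePast}. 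As in the future case, the speed bound is finite on every bounded interval $[T,t_0]$, so $(q,\dot q)$ remains in a compact set there and the same continuation criterion rules out a finite backward escape time.

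The only genuinely delicate point — hence the step I would write out most carefully — is this continuation argument. Because the a priori bounds hold only on the maximal interval, one must phrase it as a bootstrap: the velocity bound is uniform on each compact subinterval, which keeps $(q,\dot q)$ inside a compact subset of phase space as $t$ approaches a putative finite endpoint, contradicting the blow-up alternative of the maximal-solution theorem. I would stress that the backward speed bound~\eqref{speedEstimateInThePast} grows like $e^{2k(t_0-t)}$ as $t\to-\infty$ and therefore does \emph{not} confine the solution globally; but global existence requires only boundedness on bounded intervals, which it does supply.
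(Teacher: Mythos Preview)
Your proposal is correct and follows essentially the same approach as the paper: the future estimates come from the decreasing energy~$\mathcal{E}$ and the nonlocal constant~\eqref{nonlocalEnergyDerivative}, the past estimates from the family~\eqref{familydissipative} with $a=k$, and global existence is deduced from the velocity bounds via the standard continuation criterion, which you spell out more carefully than the paper's ``standard reasoning''. One small imprecision: when $a=k$ the integrand in~\eqref{energyPluspositiveintegral2} is $e^{2ks}\cdot 4k\bigl(U(q(s))-U_{\inf}\bigr)$, not just $4k\bigl(U(q(s))-U_{\inf}\bigr)$, though of course the extra positive factor does not affect the sign argument.
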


Notice that for $a=0$ the nonlocal constant of motion~\eqref{energyPluspositiveintegral} becomes
\begin{equation}\label{energyPlusAction}
  2 E\bigl(t,q(t),\dot q(t)\bigr)+
  2 k\int_{t_0}^t
  L\bigl(s,q(s),\dot q(s)\bigr)ds\,,
\end{equation}
where we recognize the action integral multiplied by $2k$, and $E$ is the dissipative energy  
\begin{equation}\label{dissipativeEnergy}
  E(t,q,\dot q)=
  \partial_{\dot q}L(t,q,\dot q)\,\dot q(t)-
  L(t,q,\dot q)=e^{kt}{\mathcal E}(q,\dot q)\,.
\end{equation}

\section{The Lane-Emden equation}
\label{LaneEmdenSection}

The Lane-Emden system has the following Lagrangian function and Euler-Lagrange equation:
\begin{gather}\label{lagrangianLaneEmden}
  L(t,q,\dot q):=t^2\Bigl(\frac{\dot q^2}{2}-\frac{q^{n+1}}{n+1}
  \Bigr),\\
  \ddot q=-q^n-\frac{2}{t}\dot q,\label{eqlane}
\end{gather}
where $q\in\R$, $n\in\N$, $t>0$. This  is a  form of Poisson's equation for the hydrostatic equilibrium of a self-gravitating spherically symmetric fluid, arising in the study of stellar interiors~\cite{Chandrasekhar}. It is named after astrophysicists Jonathan Homer Lane and Robert Emden. For such purpose the parameter~$t$ does not mean time but rather the distance from the center and $q$~is related to density and pressure. The most common boundary conditions are $q(0)>0$, $\dot q(0)=0$, and for these the literature has sharp results of global existence of solutions under such boundary conditions. Lane-Emden equation and its generalisations has been also applied in other branches of physics, for instance in  kinetic theory and quantum mechanics. For more information see Goenner and Havas' article~\cite{GoennerHavas} and the references therein. Our contribution in this Section is to exhibit three nonlocal constants of motion for the Lane-Emden equation which, when $n$ is odd, have a bearing on the maximal interval of existence for Cauchy initial conditions at some $t_0>0$, and on the asymptotic behaviour of solutions as $t\to+\infty$.

The first family is
\begin{equation}\label{lamdEndenFirstFamily}
  q_\lambda(t):=
  q\Bigl(t-\frac{\lambda}{t^2}\Bigr).
\end{equation}
We can compute
\begin{align*}
  \frac{\partial}{\partial\lambda}
  L\bigl(t,&q_\lambda(t),\dot q_\lambda(t)\bigr)
  \Big|_{\lambda=0}=\\
  ={}&t^{2}\frac{\partial}{\partial\lambda}
  \Bigl(\frac{1}{2}\bigl((1+2\lambda t^{-3})
  \dot q(t+\lambda t^{-2})\bigr)^2
  -\frac{1}{n+1}q(t+\lambda t^{-2})^2\Bigr)
  \Big|_{\lambda=0}=\\
  ={}&-q(t)^n\dot q(t)+\dot q(t)\ddot q(t)-\frac{2\dot q(t)}{t}=\\
  ={}&
  \frac{d}{dt}\Bigl(\frac{1}{2} \dot q(t)^2
  -\frac{1}{n+1} q(t)^{n+1}\Bigr)
  +\frac{2}{t}\dot q(t)^2.
\end{align*}
The associated nonlocal constant of motion is
\begin{equation}\label{nonlocalConstant1}
  \frac{1}{2} \dot q(t)^2+\frac{1}{n+1} q(t)^{n+1}
  +\int_{t_0}^t\frac{2}{s}\dot q(s)^2ds.
\end{equation}
You will recognize that the first two terms in the sum
\begin{equation}\label{calE}
  \mathcal{E}(q,\dot q):=\frac{1}{2}\dot q^2+\frac{1}{n+1} q^{n+1}.
\end{equation}
are what would be the energy first integral if the dissipative term $-2\dot q/t$ were deleted from the Lagrange equation. This~$\mathcal{E}$ decreases along solutions, because the integrand in~\eqref{nonlocalConstant1} is nonnegative. Moreover, when $n$ is odd, this $\mathcal{E}$ is a coercive norm in~$\R^2$. From this we can deduce global existence and boundedness in the future for all solutions to Lane-Emden equations. Of~course, $\mathcal{E}$~is an obvious quantity to watch for, and the fact that it is decreasing can be established easily without appealing to family~\eqref{lamdEndenFirstFamily} and integral constant~\eqref{nonlocalConstant1}.
 
To deal with existence the past, let us consider the following family
\begin{equation}
  q_\lambda(t):=q(t+\lambda t^2).
\end{equation}
We can compute, using the Lagrange equation~\eqref{eqlane}
\begin{align*}
  \frac{\partial}{\partial\lambda}
  L\bigl(t,&q_\lambda(t),\dot q_\lambda(t)\bigr)
  \Big|_{\lambda=0}=\\
  &=t^{2}\frac{\partial}{\partial\lambda}
  \Bigl(\frac12\bigl((1+2\lambda t)
  \dot q(t+\lambda t^2)\bigr)^2
  -\frac{1}{n+1}q(t+\lambda t^2)^2\Bigr)
  \Big|_{\lambda=0}=\\
  &=t^3\dot q(t)\Bigl(2\dot q(t)-t q(t)^n+t\ddot q(t)\Bigr)
  =\\
  &=-2t^4q(t)^n\dot q(t)=
  -\frac{d}{dt}\Bigl(\frac{2}{n+1}t^4q(t)^{n+1}\Bigr)
  +\frac{8}{n+1}t^3q(t)^{n+1}.
\end{align*}
The associated constant of motion is 
\begin{equation}\label{energylanePluspositiveintegral}
  t^4\Bigl(\dot q(t)^2+\frac{2}{n+1}q(t)^{n+1}\Bigr)
  +\frac{8}{n+1}\int_t^{t_0}s^3q(s)^{n+1}.
\end{equation}
When $n$ is odd the integrand is~$\ge0$, and the function
\begin{equation}
  t^4\Bigl(\dot q^2+\frac{2}{n+1}q^{n+1}\Bigr)
\end{equation}
is coercive when $t$ is away from~0. This guarantees that 0 is the infimum of the maximal time interval of existence for a solution.

The third family of perturbed motions was already introduced in our previous paper~\cite[Sec.~8]{GZNoether}:
\begin{equation}\label{thirdFamilyForLaneEmden}
  q_\lambda(t):= e^\lambda q(e^\lambda t)
\end{equation}
We can compute, again using the Lagrange equations~~\eqref{eqlane}
\begin{equation*}
  \frac{\partial}{\partial\lambda}
  L\bigl(t,q_\lambda(t),\dot q_\lambda(t)\bigr)
  \Big|_{\lambda=0}
  =\frac{d}{dt}\biggl(-\frac{2}{n+1}t^3q(t)^{n+1}\biggr)
  {}+\frac{5-n}{n+1}\,t^2q(t)^{n+1}.
\end{equation*}
whence the constant of motion
\begin{equation}\label{constantLane-Emden}
  t^2 \Bigl(\frac{2}{n+1} t q(t)^{n+1}+q(t)\dot q(t)+t
  \dot q(t)^2\Bigr)
  +\frac{n-5}{n+1} \int_{t_0}^t s^2 q(s)^{n+1} \,ds
\end{equation}
which is a first integral in the well-known case $n=5$ and a non-local constant of motion otherwise. When $n$ is odd and~$\ge5$ we can exploit the fact that the integrand is~$\ge0$ and that the function
\begin{equation}
  t^2 \Bigl(\frac{2}{n+1} t q^{n+1}+q\dot q+t
  \dot q^2\Bigr)
\end{equation}
is more and more coercive as~$t\to+\infty$. Let $t_0>0$ belong to the domain of the given solution $q(t)$. Since, as we noticed, the integrand $s^2 q(s)^{n+1}$ is nonnegative for all~$s$, there exists a constant~$c_1$ such that
\begin{equation}\label{c1}
  \frac{2}{n+1} t q(t)^{n+1}+q(t)\dot q(t)+t
   \dot q(t)^2\le \frac{c_1}{t^2}
\end{equation}
for all  $t\ge t_0$. On the other hand
\begin{align}
  \frac{2}{n+1} t q(t)^{n+1}+{}&q(t)\dot q(t)+t
   \dot q(t)^2\ge\notag\\
  &\ge \frac{2}{n+1} t q(t)^{n+1}-\frac{q(t)^2+\dot q(t)^2}{2}+t
   \dot q(t)^2=\label{c2}\\
   &=\frac{2}{n+1} t q(t)^{n+1}
   +\Bigl(t-\frac{1}{2}\Bigr)
   \dot q(t)^2-\frac{1}{2}q(t)^2.\notag
\end{align}
Combining the inequalities~\eqref{c1} and~\eqref{c2} we see that
\begin{equation*}
  \frac{2}{n+1} t q(t)^{n+1}
   +\Bigl(t-\frac{1}{2}\Bigr)
   \dot q(t)^2
   \le \frac{c_1}{t^2}+\frac{1}{2} q(t)^2\le c_2.
\end{equation*}
Hence for large~$t$
\begin{equation}\label{asymptoticEstimates}
  \lvert q(t)\rvert
   \le \frac{c_3}{t^{1/(n+1)}},\qquad
  \lvert\dot q(t)\rvert\le\frac{c_4}{\sqrt{t}}.
\end{equation}

Collecting the results:

\begin{Theorem}
If $n$ is odd then all solutions of Lane-Emden equation~\eqref{eqlane} are defined for all $t\in\mathopen\rbrack0,+\infty\mathclose\lbrack$. When $n$ is also~$\ge5$ the asymptotic estimates~\eqref{asymptoticEstimates} hold as $t\to+\infty$.
\end{Theorem}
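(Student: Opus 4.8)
The plan is to assemble the three nonlocal constants of motion derived above, each of which controls the solution on a different portion of its maximal interval, and to feed the resulting a~priori bounds into the standard continuation argument. The single structural fact that makes everything work is that, for $n$ odd, the exponent $n+1$ is \emph{even}, so that $q^{n+1}\ge0$ for every $q\in\R$; this is what simultaneously renders the energy-like expressions coercive in $(q,\dot q)$ and the three integrands sign-definite.

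First I would settle global existence in the future. The integrand $\tfrac{2}{s}\dot q(s)^2$ of the nonlocal constant~\eqref{nonlocalConstant1} is nonnegative, so $\mathcal{E}\bigl(q(t),\dot q(t)\bigr)$ is nonincreasing and hence bounded above by its value at $t_0$. Because $n$ is odd, the function $\mathcal{E}$ in~\eqref{calE} is coercive on $\R^2$, so this upper bound confines $\bigl(q(t),\dot q(t)\bigr)$ to a compact set for every $t\ge t_0$ in the maximal interval. Since the right-hand side of~\eqref{eqlane} is smooth on $\mathopen\rbrack0,+\infty\mathclose\lbrack\times\R^2$, a solution trapped in a compact set cannot escape in finite forward time, and continuation gives existence for all $t\ge t_0$.

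Next I would push existence backward to $0$. For $n$ odd the integrand $s^3q(s)^{n+1}$ in~\eqref{energylanePluspositiveintegral} is nonnegative, so the integral term grows as $t$ decreases; since the whole expression is constant, the local part $t^4\bigl(\dot q(t)^2+\tfrac{2}{n+1}q(t)^{n+1}\bigr)$ is bounded above by its value at $t_0$ for every $t\le t_0$. This function is coercive in $(q,\dot q)$ uniformly for $t$ bounded away from $0$, so on any interval $[\varepsilon,t_0]$ with $\varepsilon>0$ the solution remains in a compact set and cannot blow up. As $\varepsilon$ is arbitrary, the infimum of the maximal interval is $0$, the only genuine singular point of~\eqref{eqlane}.

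Finally, the asymptotics for $n\ge5$ odd. Here I would use the third constant~\eqref{constantLane-Emden}, whose integrand $s^2q(s)^{n+1}$ is again nonnegative, to obtain the upper bound~\eqref{c1}. The delicate point is the sign-indefinite cross term $q\dot q$: I would bound it below by $q\dot q\ge-\tfrac12\bigl(q^2+\dot q^2\bigr)$ as in~\eqref{c2}, so that the $\dot q^2$ contribution survives with coefficient $t-\tfrac12$ while the leftover $-\tfrac12 q^2$ is absorbed using the boundedness of $q$ already supplied by the future estimate of the second paragraph. Combining~\eqref{c1} and~\eqref{c2} then bounds $\tfrac{2}{n+1}tq^{n+1}+\bigl(t-\tfrac12\bigr)\dot q^2$ by a constant, and reading off its two nonnegative summands separately yields the decay rates in~\eqref{asymptoticEstimates}. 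I expect the main obstacle to be precisely this last step, namely disentangling the indefinite $q\dot q$ term and justifying that $q$ stays bounded so that $-\tfrac12 q^2$ can be absorbed, since everywhere else the nonnegativity of $q^{n+1}$ does the work almost automatically.
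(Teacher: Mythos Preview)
Your proposal is correct and follows essentially the same route as the paper: the three nonlocal constants \eqref{nonlocalConstant1}, \eqref{energylanePluspositiveintegral}, \eqref{constantLane-Emden} are used exactly as you describe to handle, respectively, forward existence, backward existence down to~$0$, and the decay estimates, with the same Young-type bound $q\dot q\ge-\tfrac12(q^2+\dot q^2)$ and the same absorption of $-\tfrac12 q^2$ via the already-established boundedness of~$q$. No additional ideas are needed beyond what you outline.
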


Our first attempt with the Lane-Emden equation was with the family
\begin{equation}\label{familyWithDynamicalSymmetry}
  q_\lambda(t):= e^\lambda q(e^{\lambda (n-1) /2}t),
\end{equation}
which was picked for the special property that if $q({}\cdot{})$ is a solution then also $q_\lambda({}\cdot{})$~is. The associated constant of motion associated to~\eqref{familyWithDynamicalSymmetry} was~\eqref{constantLane-Emden} multiplied by $(n-1)/2$. It was a surprise to discover that an equivalent constant of motion (or even better for $n=1$) could be obtained with somewhat simpler calculation through the unremarkable family~\eqref{familyWithDynamicalSymmetry}. We would vaguely expect that the dynamical symmetry property of family~\eqref{familyWithDynamicalSymmetry} would give it an edge over~\eqref{thirdFamilyForLaneEmden}, but this seems not to be the case.


\section{Maxwell-Bloch system}
\label{MBsection}

The Maxwell-Bloch equations were introduced to model laser optics~\cite{Allen&Eberly} by Lamb~\cite{Lamb} (in a cavity) in 1964 and by Arecchi and Bonifacio~\cite{AB} (for free propagation) in 1965, but have interesting features from a general Dynamical Systems point of view. The conservative MB-equations with the rotating wave approximation are the following 5-dimensional system
\begin{equation}\label{MB-5}
  \dot x_1=y_1,\quad
  \dot y_1=x_1 z,\quad
  \dot x_2=y_2,\quad
  \dot y_2=x_2z,\quad
  \dot z=-(x_1y_1+x_2y_2).
\end{equation}
Authors that have recently written about this system are Huang~\cite{Huang}, Birtea and Ca\c{s}u \cite{Birtea&Casu}, 
and Ca\c{s}u~\cite{Casu}.

We are particularly interested in this last paper~\cite{Casu}, where equations~\eqref{MB-5} are embedded through
\begin{equation}\label{transf}
  q_1=x_1,\quad \dot q_1=y_1,\quad q_2=x_2,\quad
  \dot q_2=y_2,\quad \dot q_3=z
\end{equation}
into the following Lagrangian system in dimension~6
\begin{gather}\label{LagrangianMB}
  L=\frac12
  \bigl(\dot q_1^2+\dot q_2^2+\dot q_3^2+\dot 
    q_3\bigl(q_1^2+q_2^2)\bigr),\\
  \label{MB}
  \ddot q_1=q_1\dot q_3,\qquad \ddot q_2=q_2\dot q_3,\qquad
  \ddot q_3=-(q_1\dot q_1+q_2\dot q_2),
\end{gather}
for which three independent first integrals are known
\begin{equation}\label{firstIntegralsForMaxwellBloch}
  E=\frac12(\dot q_1^2+\dot q_2^2+\dot q_3^2),\quad
  B=\dot q_3+\frac12(q_1^2+q_2^2),\quad
  J=q_1\dot q_2-q_2\dot q_1.
\end{equation}
These are easily deduced from Noether's theorem because of three obvious symmetries of the Lagrangian: $E$~comes from the fact that the Lagrangian is autonomous, $B$~from the fact that~$L$ does not depend on~$q_3$, $J$~from the invariance of~$L$ under rotations in the $(q_1,q_2)$ plane. The first integral called~$H$ in formula~(15) of Huang~\cite{Huang} reduces to~$E$.

A useful nonlocal constant of motion for the Lagrangian system~\eqref{LagrangianMB} arises from the non-uniform scaling family
\begin{equation}\label{familyMB}
  q_\lambda(t):=\bigl(e^{\lambda}q_1(t), 
  e^{\lambda}q_2(t), e^{a\lambda}q_3(t)\bigr)
\end{equation}
where~$a$ is a parameter. We compute 
\begin{multline}
  \frac{\partial}{\partial\lambda}
  L\bigl(t,q_\lambda(t),\dot q_\lambda(t)\bigr)
  \Big|_{\lambda=0}
  =\\
  =\dot q_1(t)^2+\dot q_2(t)^2
  +a\dot q_3(t)^2+\Bigl(1+\frac{a}{2}\Bigr)
  \dot q_3(t)\bigl(q_1(t)^2+q_2(t)^2\bigr).
\end{multline}
The choice $a=-2$ simplifies the formula:
\begin{equation*}
  \frac{\partial}{\partial\lambda}
  L\bigl(t,q_\lambda(t),\dot q_\lambda(t)\bigr)
  \Big|_{\lambda=0}
  =\dot q_1(t)^2+\dot q_2(t)^2-2\dot q_3(t)^2
  =2E-3\dot q_3(t)^2.
  \end{equation*}
The associated constant of motion is  
\begin{equation}\label{nonlocalConstantForMaxwellBloch}
  (\dot q_1(t), \dot q_2(t), B)
  \cdot(q_1(t),q_2(t),-2q_3(t)) -2Et
  +3\int^t_{t_0}\dot q_3(s)^2 ds.
\end{equation}
A first consequence is that the following function decreases along the solutions
\begin{equation}
\frac12\frac{d}{dt}\bigl(q_1(t)^2+q_2(t)^2\bigr)-2Bq_3(t)-2Et.
\end{equation}
More interestingly, using~\eqref{MB} the constant of motion~\eqref{nonlocalConstantForMaxwellBloch} can be rewritten as
\begin{equation}
  -\ddot q_3(t)-2Bq_3(t)-2Et+3\int_{t_0}^t \dot q_3(s)^2ds
  =\text{constant}
\end{equation}
which only features~$q_3$ explicitly. By taking the time derivative, this turns into a differential equation of order~3 for~$q_3$:
\begin{equation}\label{thirdorder}
  -\dddot q_3(t)-2B\dot q_3(t)-2E+3\dot q_3(t)^2=0.
\end{equation}
Of course, the coefficients $B,E$ depend on the initial conditions. Multiplying by $\ddot q_3$ and integrating we get a constant of motion containing second time derivatives:
\begin{equation}\label{constantOfMotionWithq3dotdot}
  \frac12\ddot q_3^2+2E\dot q_3+B\dot q_3^2-\dot q_3^3.
\end{equation}
Using \eqref{MB} and~\eqref{firstIntegralsForMaxwellBloch} we can rewrite~\eqref{constantOfMotionWithq3dotdot} as a true first integral, which however is a function of $E,B,J$:\begin{equation}\label{newFirstIntegral}
  K=\frac12(q_1\dot q_1+q_2\dot q_2)^2+
  (\dot q_1^2+\dot q_2^2+\dot q_3^2)\dot q_3+
  \frac12(q_1^2+q_2^2)\dot q_3^2=
  2BE-\frac{1}{2}J^2.
\end{equation}
The constant of motion~\eqref{constantOfMotionWithq3dotdot} can be used to solve for $\dot q_3$ by quadrature:
\begin{equation}
  \pm\frac{\ddot q_3}{\sqrt{2}
  \sqrt{\dot q_3^3-B\dot q_3^2-2E\dot q_3+K}}
  =1.
\end{equation}
Define the function $\varphi$ as
\begin{equation}\label{specialFunction}
  \varphi_{E,B,K}(u)=\int
  \frac{du}{\sqrt{2}\sqrt{u^3-Bu^2-2Eu+K}}
\end{equation}
(it can be expressed in terms of special elliptic functions). Then the following is constant of motion dependent on~$t$:
\begin{equation}\label{ellipticConstantOfMotion}
  \pm\varphi_{E,B,K}(\dot q_3)-t.
\end{equation}
The $\pm$ is the sign of $\ddot q_3=-(q_1\dot q_1+q_2\dot q_2)$.

The constant of motion~\eqref{constantOfMotionWithq3dotdot} helps visualize the qualitative behaviour of $\dot q_3(t)$, because the trajectories of the couple $(\dot q_3,\ddot q_3)$ must lie in level sets of the polynomial function of two variables
\begin{equation}
  \psi_{E,B}(u,v):=\frac{1}{2}v^2+2Eu+Bu^2-u^3,
\end{equation}
which is simply formula~\eqref{constantOfMotionWithq3dotdot}, interpreted as a function of the two independent variables~$\dot q_3,\ddot q_3$, with $E,B$ treated as simple constants. More precisely, given initial data $q_1(0),\dot q_1(0),q_2(0),\dot q_2(0),q_3(0),\dot q_3(0)$ we can calculate the specific values of $E,B,K$, and these will determine the particular level set $\psi_{E,B}=K$ to which the point $(\dot q_3(t),\ddot q_3(t))$ will belong for all~$t$. Parts of the level set may easily be seen as inaccessible, because they lie either in a different connected component, or outside the stripe $\lvert\dot q_3\rvert\le(2E)^{1/2}$ imposed by the conservation of~$E$. Let us give a couple of specific examples without attempting a full analysis.

In Figure~\ref{genericOrbit} the thick closed line is the level set corresponding to the initial data $q_1(0)=\dot q_1(0)=q_2(0)=\dot q_2(0)=1$, $\dot q_3(0)=-2$, whence $\ddot q_3(0)= -q_1(0)\dot q_1(0)-q_2(0)\dot q_2(0)=-2$. The dashed line to the right belongs to the same level set $\psi_{E,B}=K$, but is not visited by the solution. The thin curved lines are other level sets of $\psi_{E,B}$.

The closed curve can degenerate into an equilibrium. For instance,  the initial data $q_1(0)=1$, $q_2(0)=\dot q_1(0)=3$, $\dot q_2(0)=-1$, $\dot q_3(0)=-1$ give the equilibrium $(\dot q_3,\ddot q_3)=(-1,0)$ of the third order equation \eqref{thirdorder}. The full solution is
\begin{equation}\label{stationarySolution}
  q_1(t)=\cos t+3\sin t,\qquad
  q_2(t)=3\cos t-\sin t,\qquad
  \dot q_3(t)=-1.
\end{equation}

The level set $\psi_{E,B}(u,v)=K$ is a nodal cubic for $B>0,E=B^2/2$ and $K=B^3$ (so $J=0$) as for example for $q_1(0)=2$, $\dot q_1(0)=q_2(0)=\dot q_2(0)=0$, $\dot q_3(0)=-1$. Figure~\ref{homoclinicOrbit} shows the corresponding homoclinic orbit. In this case the integral~\eqref{specialFunction} simplifies and we can find $\dot q_3$ explicitly
\begin{equation}\label{homoclinicSolution}
  \dot q_3(t)=1-2 \sech^2t.
\end{equation}
 Plugging this function into the first two equations \eqref{MB} with the initial data above and by the transformation \eqref{transf}, we get the homoclinic solution to the 5-dimensional system \eqref{MB-5}
\begin{equation}
x_1(t)=2\sech t,\ y_1(t)=-2\sech t\tanh t,\  x_2(t)=y_2(t)=0,\  z(t)=1-2 \sech^2t.
\end{equation}
 Such homoclinic  solutions were found by Holm,  Kovacic, Sundaram~\cite{Holm}, Huang \cite{Huang}, and Birtea, Ca\c su~\cite{Birtea&Casu} with  different methods.

\begin{figure}
\begin{center}
\includegraphics[height=.4\textheight]{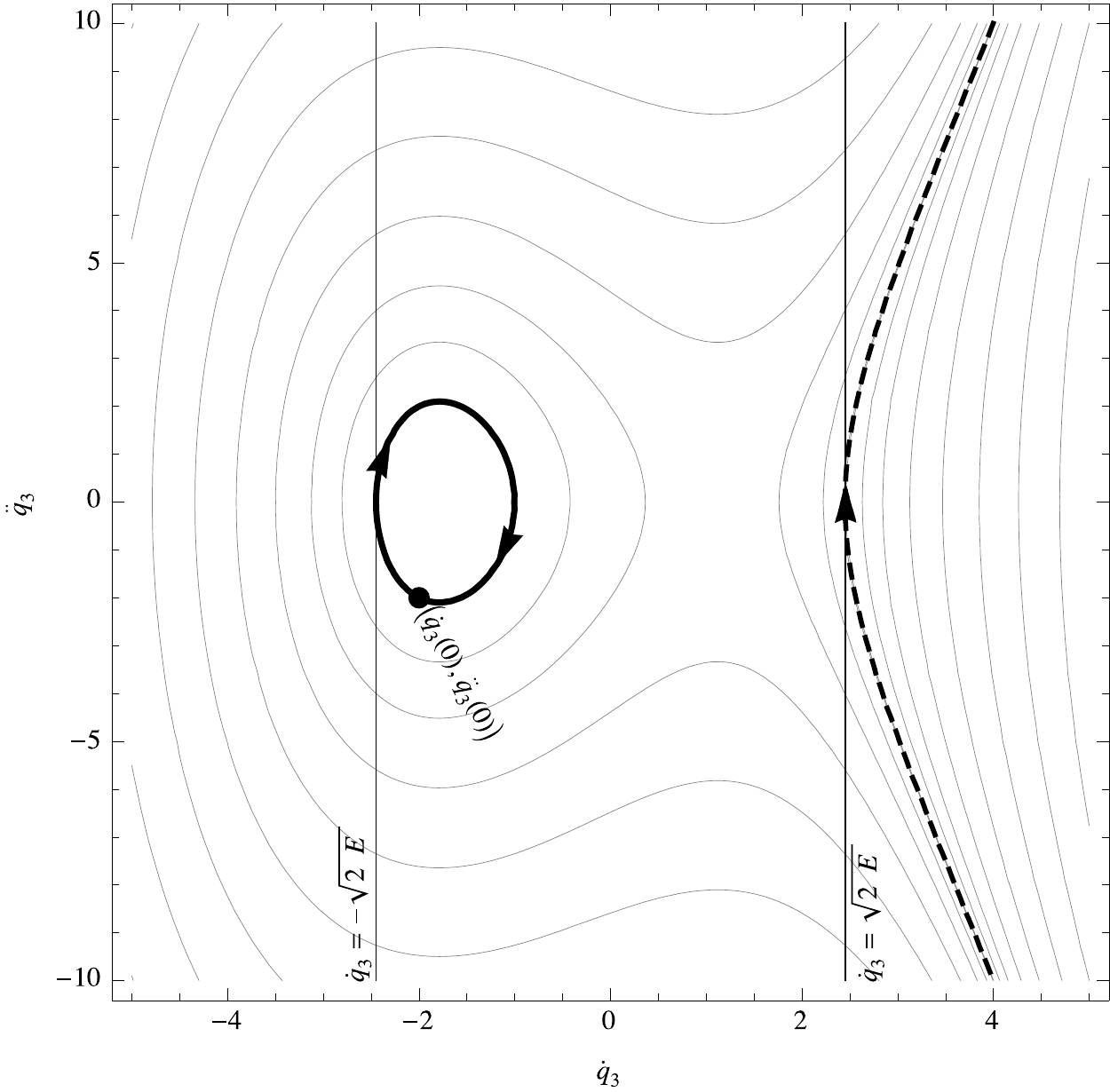}
\end{center}
\caption{A generic orbit for $(\dot q_2,\ddot q_3)$ embedded in the level sets of $\psi_{E,B}$}
\label{genericOrbit}
\bigskip
  \begin{center}
\includegraphics[height=.4\textheight]{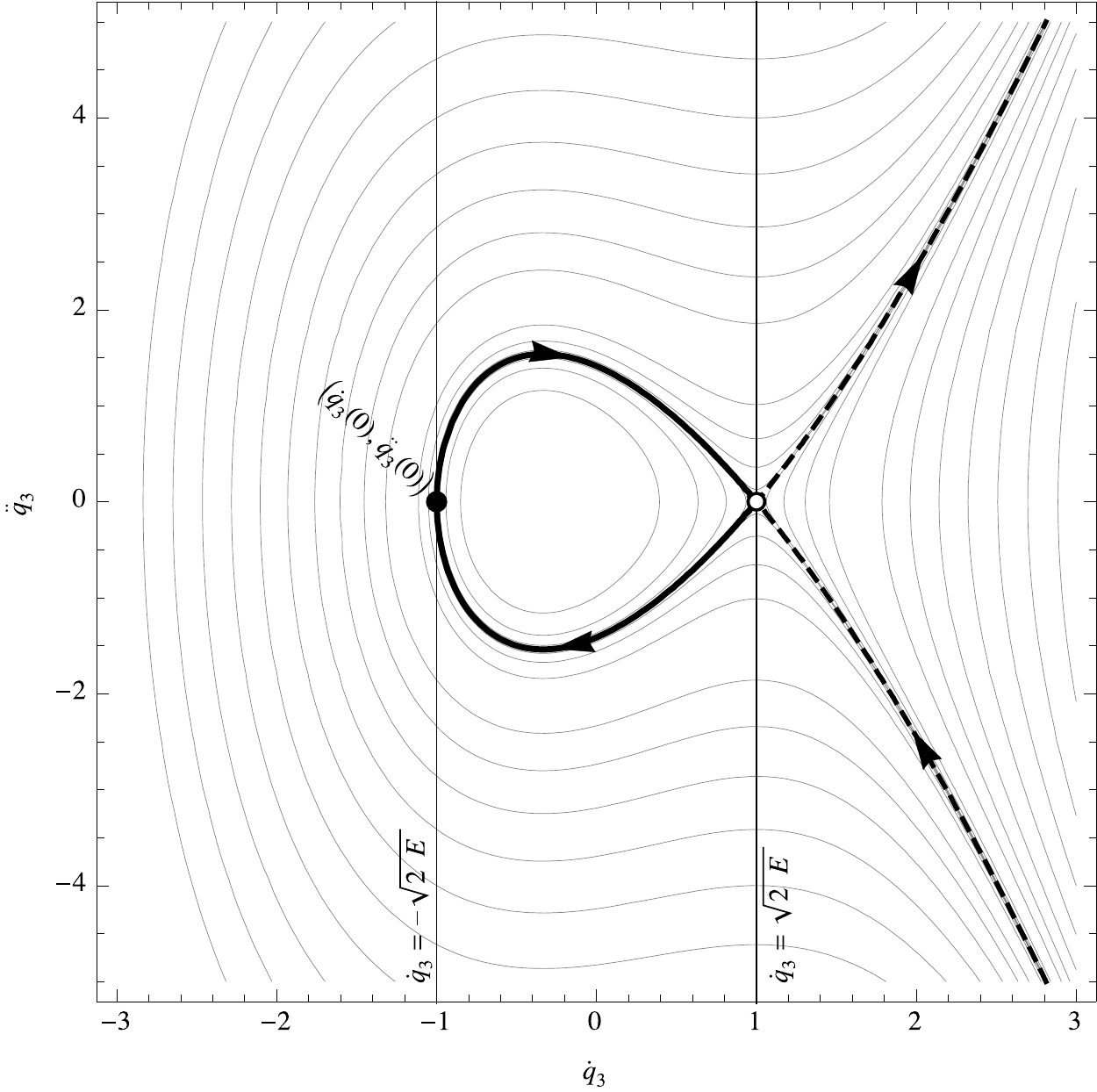}
\end{center}
\caption{Homoclinic orbit}
\label{homoclinicOrbit}
\end{figure}

\section*{Acknowledgment}

The second author did part of the work for this paper while visiting Universitat Aut\`{o}noma de Barcelona. He is thankful for the warm hospitality of Armengol Gasull. We thank Nicola Sansonetto for some useful comments.

\section*{Keywords}

Lagrangian ODE; nonlocal constants of motion; Maxwell-Bloch with RWA; Lane-Emden; dissipative mechanical systems.


\end{document}